\newcommand{\mbf}[1]{\mathbf{#1}}
\newcommand{\bb}[1]{\mathbb{#1}}
\newcommand{\ml}[1]{\mathcal{#1}}
\newcommand{\mb}[1]{\boldmath{#1}}
\newcommand{\mc}[1]{\mathcal{#1}}
\DeclareMathOperator*{\argmin}{arg min}    
\newtheorem{theorem}{Theorem}
\newtheorem{remark}{Remark}
\newtheorem{prop}{Proposition}
\title{\LARGE \bf
Receding Horizon Games with Coupling Constraints\\ for Demand-Side Management}
\author{Sophie Hall, Giuseppe Belgioioso, Dominic Liao-McPherson, and Florian D\"{o}rfler
\thanks{The authors are with the ETH Zürich Automatic Control Lab, 8092 Zürich, Switzerland. {Emails: \texttt{\{shall, gbelgioioso, dliaomc, dorfler\}@ethz.ch}}. This work was supported by the Swiss National Science Foundation through NCCR Automation (Grant Number 180545).}
        }
\begin{document}

\maketitle
\thispagestyle{empty}
\pagestyle{empty}

\begin{abstract}
Distributed energy storage and flexible loads are essential tools for ensuring stable and robust operation of the power grid in spite of the challenges arising from the integration of volatile renewable energy generation and increasing peak loads due to widespread electrification.
This paper proposes a novel demand-side management policy to coordinate self-interested energy prosumers based on \textit{receding horizon games}, i.e., a closed-loop receding-horizon implementation of game-theoretic day-ahead planning.
Practical stability and recursive constraint satisfaction of the proposed feedback control policy is proven under symmetric pricing assumptions using tools from game theory and economic model predictive control. Our numerical studies show that the proposed approach is superior to standard open-loop day-head implementations in terms of peak-shaving, disturbance rejection, and control performance.
\end{abstract}

\section{Introduction}
Global electricity demand is predicted to increase by nearly 30\% from 2020 to 2030, with the share of solar and wind energy growing from under 10\% to 23\% \cite{iea2021world}. This increasing demand along with the volatility of renewables is leading to larger peak loads in many distribution grids. However, improving the physical infrastructure to handle this unprecedented increase in both energy demand and peak loads is extremely expensive. On the other hand, the widespread deployment of sensing, communication, and actuation technologies, such as smart meters, and the proliferation of local storage and generation offers a cheaper alternative for reducing peak loads while increasing the resilience of the power grid \cite{parag2016electricity}.

%
%

In the US and Europe there has been a rapid proliferation of \textit{prosumers}: consumers who produce and store energy locally in addition to drawing power from the main grid. In the future, these distributed storage and generation devices will allow prosumers to reduce the amount of energy they draw from the main grid, shift what remains to off-peak hours, or even  provide energy to other prosumers in a process known as demand-side management (DSM). In privatized grids, e.g., the Texas interconnect, prosumer behaviour is not directly controllable, and they must be incentivized to participate in contributing to grid stability. 

Game theory has emerged as a promising framework for designing mechanisms that incentivize self-interested prosumers to participate in ensuring safe grid operation while pursuing their local economic objectives. There is an extensive literature on game-theoretic DSM schemes for self-interested prosumers\cite{atzeni2013demand, li2017efficient, jo2020demand,belgioioso2021operationally, cadre2020peer}. A typical approach is to incentivize load shifting by dynamically changing the electricity price and enforcing safe grid operation through operational limits on both lines and aggregate loads \cite{atzeni2013demand, li2017efficient, jo2020demand}. The grid operation problem is then formulated as a game played between the prosumers who attempt to meet their energy needs as (cost) efficiently as possible while respecting grid constraints. The load profile is then computed by finding a suitable game-theoretic equilibrium between the prosumers, e.g., a generalized Nash equilibrium \cite{belgioioso2021operationally, cadre2020peer}. The equilibrium is called generalized as prosumers' decisions are coupled through shared constraints, i.e., limits on the aggregate energy demand.


Most existing game-theoretic DSM schemes, e.g., \cite{atzeni2013demand,belgioioso2021operationally}, perform day-ahead planning wherein prosumers plan overnight for the upcoming day based on demand and generation forecasts and commit to executing that plan (with deviation often resulting in a financial penalties). In control-theoretic terms, this corresponds to repeated open-loop control over a 24-hour horizon. These schemes are motivated by the existing day-ahead energy markets but in practice, such schemes are inefficient as, on a given day, prosumers have no information about tomorrow's consumption and prices. This leads to undesired ``end-of-day" effects, namely, prosumers tend to discharge their batteries and reduce their load on the grid at the end of the planned horizon \cite{jo2020demand, cadre2020peer, belgioioso2021operationally}. Furthermore, open-loop control strategies cannot react to unexpected disturbances, such as inaccurate forecasts of renewable generation, sudden spikes in the passive load (e.g., due to heat waves, etc.), or decreases in the power available from the main grid (e.g., line faults).

In fully-cooperative settings, such challenges are typically tackled using receding-horizon control schemes such as (multi-agent) model predictive control (MPC), which offers a powerful paradigm for optimal control of constrained systems. There exists an extensive literature on MPC schemes for DSM \cite{zong2012application, silvente2015rolling, mahdavi2017model}. A receding-horizon implementation is suitable for future energy markets in which local, decentralized real-time trading is predicted to play an important role \cite{parag2016electricity}. However, such MPC approaches are fully-cooperative, namely, they assume prosumers are working towards a common goal (the social welfare), and cannot capture the self-interested nature of prosumers. 

 
To overcome the aforementioned limitations of repeated open-loop DSM and fully-cooperative MPC schemes, in this paper we propose an MPC-inspired game-theoretic DSM scheme which we refer to as a Receding Horizon Game (RHG). Our contributions are threefold:
\begin{enumerate}
\item[(i)] We propose a game-theoretic MPC mechanism for DSM in which at each time step: (1) a generalized game over a prediction horizon is solved to obtain the optimal storage and consumption profile of each prosumer that are also jointly operationally-feasible for the distribution grid, i.e., the aggregate-load limits are respected; (2) each prosumer applies the first control input of the planned profile; (3) finally, the prediction horizon is forward-shifted and the procedure repeats;

\item[(ii)] We prove closed-loop stability of the proposed policy under the reasonable assumption that utility electricity prices are uniform across the population of prosumers by combining potential games with economic MPC; and
\item[(iii)] We show via numerical simulations with real data the superior performances of RHG over day-ahead optimization for peak shaving and successful disturbance rejection in a contingency scenario where the aggregate load supplied by the grid drops by $-60\%$.
\end{enumerate}

Our approach is related to others in the literature. The authors in \cite{stephens2015game} also consider a game-theoretic MPC approach for DSM, but do not consider any system-wide coupling constraints. Enforcing these system-wide constraints in the presence of disturbances is essential for safe grid operation. Moreover, they do not provide any closed-loop stability or constraint satisfaction results. A receding-horizon generalized game approach is also adopted in \cite{scarabaggio2022distributed} to solve the DSM problem with uncertainty in wind power forecasting, however no convergence analysis or stability guarantees are given. To account for inaccurate forecasts, \cite{estrella2019shrinking} proposes a ``shrinking-horizon'' DSM scheme which however still suffers from ``end-of day'' effects. The authors of \cite{paola2018distributed, fele2018framework} propose a receding-horizon framework for electric load scheduling. Their solution differs substantially from ours as they consider periodic Wardrop equilibria of an aggregative game without any coupling constraints while we consider Nash equilibria with system-wide coupling constraints. Game-theoretic MPC approaches have also been proposed for other applications such as autonomous driving/racing \cite{spica2020real,cleach2022algames} and highway traffic control \cite{cenedese2021highway}, without any stability certificates. Finally, RHGs are an extension of multi-agent economic MPC \cite{mueller2017economic} that relaxes the assumption that agents are fully cooperative. 


\textit{Notation}: We denote by $\bb{Z}_N$ a sequence of $N$ non-negative integers $\bb{Z}_N = \{0,\dots, N\!-\!1\}$. Given a set $\mc A:=\{1, \ldots,M \}$ of $M$ agents labelled by $v\in \mathcal A$. We denote the stacked vector of all agents' decisions by  $u = \text{col}(\{u^v\}_{v\in \ml{A}}):= [(u^1)^\top, \ldots, (u^M)^\top]^\top$, where $u^v$ is the decision vector of agent $v$, and by $u^{-v}$ the decision of all agents except agent $v$, i.e., $u^{-v} = \text{col}(\{u^{s}\}_{s\in \ml{A}\backslash v})$. Given $N$ matrices, $H_1 , . . . , H_N$, $\text{blkdiag}(H_1 , . . . , H_N )$ denotes the block diagonal matrix with $H_1 , . . . , H_N$ on the main diagonal. The zero column vector of dimension $N$ is denoted as $\mbf{0}_{N}$. Our use of class $\mc{K}$, $\mc{KL}$, and $\mc{K}_\infty$ comparison functions follows \cite[\S 1.2]{faulwasser2018economic}. 

%

%

\section{Modelling}

We consider a distribution grid composed of $M$ active prosumers $v \in \ml{A} = \{1,\dots, M\}$ connected to the main transmission grid via a point of common coupling. 
Each active prosumer $v\in \mc{A}$ consumes $e^v_t$ and stores $s^v_t$ units of power, during each time instant $t$. A subset of prosumers $\ml{A}_g\subset \ml{A}$ can additionally generate $g^v_t$ units of power using non-dispatchable generation units, such as solar or wind based generators.  
For these non-dispatchable units, the generation at each time $t$ is solely determined by external factors, such as the weather. 

To fulfil their energy needs, prosumers can buy energy from the main grid. Their load on the grid at time $t$ is denoted as $l_t^v \in \bb{R}$ and given by

\begin{align} 
l_t^v = e_t^v + s_t^v -g_t^v, \quad \forall v \in \ml{A} \label{eqn:LoadGen},
\end{align}
with $g_t^v = 0, \;  \forall v \notin \ml{A}_g$. Furthermore, we define a set of passive consumers $v \in \ml{P}$ that do not participate in the DSM program but still contribute to the aggregate load on the grid

\begin{align} \label{eqn:AggrLoad}
L_t = L_t^{\ml{A}}  + L_t^{\ml{P}} , \quad
L_t^{\ml{A}} = \sum_{v\in \ml{A}}l_t^v, \quad 
L_t^{\ml{P}}  = \sum_{v\in \ml{P}} l_t^v, 
\end{align}
with $L_t^{\ml{A}} \in \bb{R}$ and $L_t^{\ml{P}} \in \bb{R}$ denoting the aggregate load from active prosumers and passive consumers, respectively. 

\subsection{Energy storage}
\noindent The battery of each prosumer $v\in \ml{A}$ follows the dynamics

\begin{subequations}
\begin{align} \label{eqn:BatteryDyn}
q_{t+1}^v = \alpha^v q_t^v + \beta^v s_t^v,  \quad \forall v \in \ml{A},
\end{align}
where $q_t^v$ is the state-of-charge (SoC) and $s_t^v$ is a controllable input which indicates charging for $s_t^v > 0$ and discharging for $s_t^v < 0$. The parameters $\alpha^v \in \left[0, 1\right]$ and $\beta^v \in \left[0, 1 \right]$ are the leakage rate and the charging efficiency, respectively. Each battery is subject to the following constraints on their storage capacity and charging rate:

\begin{align}  \label{eqn:CapacityCon}
0 \leq q_t^v \leq \bar{q}^v, \quad \forall v \in \ml{A}\\ \label{eqn:ChargeCon}
\underline{s}^v\leq \beta^v s_t^v \leq \bar{s}^v, \quad \forall v \in \ml{A}
\end{align}
\end{subequations}
where $\bar{s}^v$ and $\underline{s}^v$ are the upper and lower charging limits, and $\bar{q}^v$ is the storage capacity.

\subsection{Flexible energy consumption}
\noindent Prosumers are willing to shift their load $l_t^v$ not only by using their local storage but also by adapting their energy consumption $e_t^v$. Prosumers have an inflexible baseline consumption, e.g., the energy needed for domestic appliances, and a flexible consumption, e.g., electric vehicle charging. The minimum and maximum consumption at every hour are modelled via the following constraints:

\begin{align} \label{eqn:ConsLim}
\underline{e}^v \leq e_t^v \leq \bar{e}^v, \quad \forall v \in \ml{A} ,
\end{align}
where $\bar{e}^v > \underline{e}^v \geq 0$ are the consumption bounds.  

Typically, prosumers are willing to shift their flexible consumption to off-peak hours but not to reduce their total daily power consumption \cite{li2017efficient, jo2020demand}. To model these limits on consumption flexibility, we introduce an energy shift state $\zeta_t^v$ which integrates the deviation from the nominal consumption $e_t^{v,\text{ref}}$, i.e., the amount of energy that would be consumed without DSM. The dynamics of $\zeta_t^v$ are given by

\begin{equation} \label{eqn:Integrator}
\zeta_{t+1}^v =  \zeta_t^v + (e_t^v-e_t^{v,\text{ref}} \,), \quad \forall v \in \ml{A}.
\end{equation}
The shift state $\zeta_t^v$ can be interpreted as a consumption debt, if $\zeta_t^v <0$, or credit, if $\zeta_t^v >0$. In practice, prosumers are willing to shift their consumption by a limited amount, thus motivating the following box constraints on the shift state: 

\begin{equation} \label{eqn:DebtLim}
\underline{\zeta}_t \leq \zeta^v_t \leq \bar{\zeta}_t, \quad \forall v \in \ml{A}.
\end{equation}
Picking $\underline{\zeta}_t = \bar{\zeta}_t = 0$ once in a period (i.e., in the next 12 hours or the next day) ensures that the total energy consumption over that period remains constant. 


\subsection{Load on utility grid}
\noindent The power supplied to individual prosumers is limited, e.g., by fuses in their homes, total storage capacity or load, thus motivating the following constraints:

\begin{equation} \label{eqn:LoadLim}
0 \leq l_t^v \leq \bar{l}^v, \quad \forall v \in \ml{A},
\end{equation}
where $\bar{l}^v$ is the maximum power a prosumer can absorb from the main grid at every time-instant. 

Power line and transformer constraints at the point of common coupling limit the total power that the main grid can supply to the whole distribution network. We model this using the following constraint on the aggregate load $L_t$:

\begin{equation} \label{eqn:AggrLoadLim}
\underline{L}_t \leq L_t \leq \bar{L}_t,
\end{equation}
where $\bar{L}_t>\underline{L}_t>0$.

\subsection{Self-interested prosumer model} \noindent
Each prosumer $v \in \ml{A}$ is self-interested and aims to minimize its electricity bill, the operational cost of its battery, and the discomfort from shifting its energy consumption, subject to the operational limits of its devices but is also incentivized to help enforce system-wide coupling constraints (as stability of the grid is in the best interest of all prosumers) \cite{atzeni2013demand, belgioioso2021operationally}.

Denote each prosumer $v$'s stacked control vector by $u^v_t = (e_t^v, s_t^v)$, which collects the energy consumption and the battery charging/discharging inputs, and state vector by $x_t^v = (\zeta_t^v, q_t^v)$, which collects the battery SoC and the energy debt. Then, the dynamics of each prosumer $v$ can be cast as an (linear time-invariant) LTI system of the form

\begin{align} \label{eqn:LTIDynamics}
x^v_{t+1} = A^v x^v_t + B^v u^v_t + d_t^v,
\end{align}
with the following system matrices and disturbance vector

\begin{align*}
A^v = \begin{bmatrix}
1& 0\\
0 & \alpha^v
\end{bmatrix}, \quad
B^v = \begin{bmatrix}
1& 0\\
0 & \beta^v
\end{bmatrix}, \quad
d_t^v = \begin{bmatrix}
-e_t^{v,\text{ref}}\, \\
0
\end{bmatrix}.
\end{align*}
The local cost function for each prosumer is of the form

\begin{align} \label{eqn:StageCost}
\ell_t^v(x^v_t,u^v_t,u_t^{-v}) =\,  \underbrace{\sigma^v(L_t) \,l_t^v}_{\text{energy cost}} + \underbrace{\gamma_{1,t}^v \, \zeta_t^2}_{\text{energy shift}} +  \underbrace{\gamma_{2,t}^v \, q_t^2}_{\text{battery usage}},
\end{align}
where $\sigma^v(L_t)$ is the price of electricity, and $\gamma_{1,t}$ and $\gamma_{2,t}$ are positive weights. Similarly to \cite{atzeni2014noncooperative}, we model the energy price $\sigma^v$ as an affine function of the total demand on the grid $L_t$, i.e.,

\begin{align} \label{eqn:EnergyPrice}
\sigma^v(L_t) = \rho_{1,t}^v\; L_t + \rho_{2,t}^v,
\end{align}
where $\rho_{1,t}^v, \rho_{2,t}^v$ are positive constants representing different price rates that prosumers previously negotiated with suppliers. The second and third terms in (\ref{eqn:StageCost}) account for the discomfort each prosumer experiences when shifting its consumption and the desire to minimize the usage of its storage unit to avoid degradation, respectively.

\section{Game-theoretic MPC}

\subsection{Problem formulation}
\begin{figure*}
        \centering
        \includegraphics[width=0.99\textwidth]{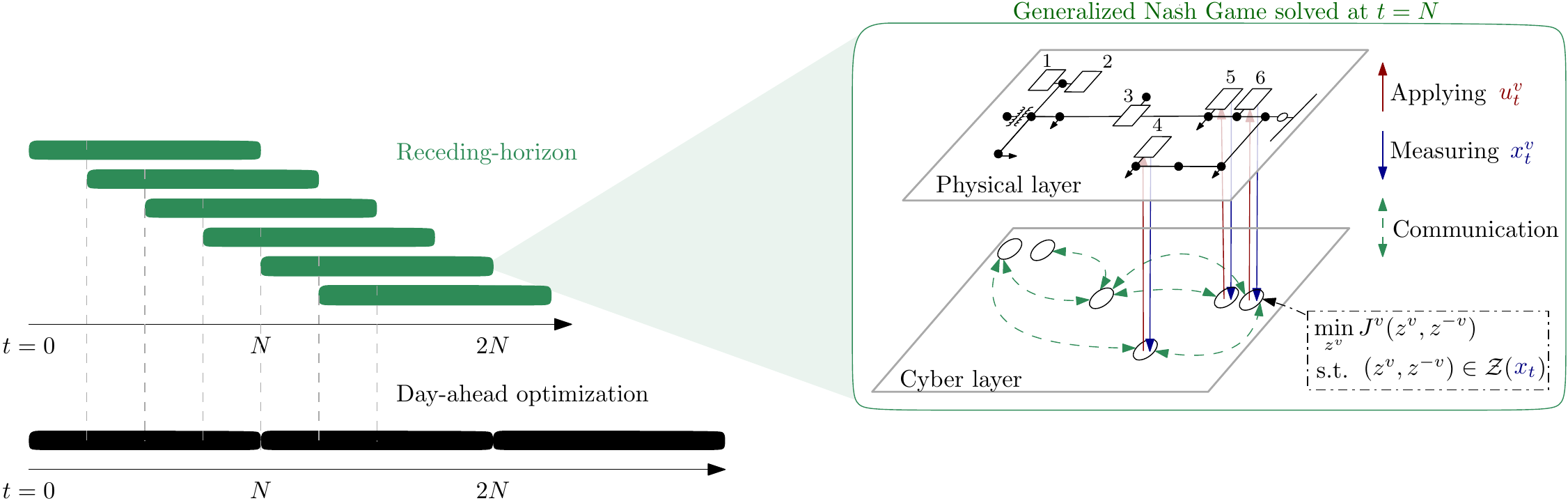} 
        \caption{On the left, the receding-horizon strategy (top) is contrasted with the day-ahead optimization (bottom). On the right, the generalized game solved at $t=N$ is shown, displaying the interaction between the physical system of every prosumer and the cyber layer with the v-GNE computation.}
\label{fig:RHGschematic}
\end{figure*} 

Typically, DSM schemes require the prosumers to first solve a planning problem and then commit to the resulting ``optimal" action profiles over a certain period of time into the future, i.e., over a prediction horizon $\bb{Z}_N = \{0, \dots,N-1\}$. A common choice is day-ahead optimization, i.e., $N = 24$, with a sampling period of one hour\cite{atzeni2014noncooperative, cadre2020peer}. We  assume that an accurate forecast of the non-dispatchable generation $g_t^v$ is available over the prediction horizon.

In day-ahead optimization, each prosumer $v \in \ml{A}$ aims to minimize their cumulative cost over the next 24 hours by solving the following optimal control problem (OCP):

\begin{subequations}
\label{eqn:MPCagent}
\begin{align} \label{eqn:RunningCost}
&&\displaystyle \min_{u^v,\, x^v} &\; \displaystyle \sum_{k\in \bb{Z}_N} \ell_k^v(x^v_k,u^v_k,u_k^{-v})\\ \label{eqn:Constr1}
&&\textrm{s.t.} \quad &  x_{k+1}^v =  A^v x^v_k + B^v u^v_k + d_k^v,\;   k \in \bb{Z}_{N}\\
\label{eqn:Constr2}
&&& u_k^v \in \ml{U}_k^v  \cap \ml{C}_{k}(u_k^{-v}), 
\hspace*{3.7em}
k \in \bb{Z}_N 
\\ 
&&& x_k^v \in \ml{X}_k^v, \ x_0^v = \mbf{x}^v,
\hspace*{3.9em}
 k \in \bb{Z}_{N+1}
\label{eqn:Constr3}
\end{align}
\end{subequations}
where $\mbf{x}^v$ is the initial state of prosumer $v$ and 

\begin{align} \label{eqn:ConstrSet1}
\ml{X}_{k}^v &= \{ x_k^v\, | \,  (\ref{eqn:CapacityCon}), (\ref{eqn:DebtLim}) \; \text{hold} \},\\ \label{eqn:ConstrSet2}
\ml{U}_{k}^v&= \{u_k^v\,|\, (\ref{eqn:LoadGen}), (\ref{eqn:ChargeCon}), (\ref{eqn:ConsLim}),(\ref{eqn:LoadLim}) \; \text{hold} \},\\ \label{eqn:ConstrSet3}
\ml{C}_k(u_k^{-v}) &= \{u_k^v \, | \, (\ref{eqn:AggrLoadLim}) \; \text{holds}  \}.
\end{align} 
The $M$ OCPs in (\ref{eqn:MPCagent}) are coupled in the stage cost \eqref{eqn:RunningCost} as the energy price $\sigma^v(L_t)$ depends on the aggregate load of all $M$ prosumers and in the input constraints \eqref{eqn:Constr3} due to the aggregate load limits. Together, these inter-dependent OCPs form a generalized game\footnote{The information structure of the game is discussed in Remark \ref{remark:information}.}, i.e., an equilibrium problem where the cost and feasible set of each prosumer depend on other prosumer’s decisions \cite{facchinei2009generalized}.


To recast \eqref{eqn:MPCagent} in a more compact form, we define the stacked vector $z^v = (u^v, x^v )$ and a vector $\phi$ that collects all of the exogenous parameters in (\ref{eqn:MPCagent}), i.e., $\phi \!=\! \text{col}(\phi_1, \phi_2) $ with $\phi_1 = \text{col}(\{\bar{\zeta}_k^v, \underline{\zeta}_k^v\}_{k\in\bb{Z}_{N+1}, v\in \ml{A}})$ and $\phi_2 = \text{col}(\{d_k^v, g_k^v , \bar{L}_k,  \underline{L}_k,  L_k^{\ml{P}}, \rho_{k,1}^v, \rho_{k,1}^v\}_{k\in\bb{Z}_N, v\in \ml{A}})$.

Using this notation, we combine the constraint sets of all prosumers (\ref{eqn:ConstrSet1})-(\ref{eqn:ConstrSet3}) in a global action set 

\begin{align}
 \ml{Z}(\mbf{x}, \phi) =  \{  \forall v\in \ml{A},\,(u^v, x^v)\, | \, \eqref{eqn:Constr1} -\eqref{eqn:Constr3} \; \,\text{hold} \},
\end{align}
and compactly rewrite the coupled OCPs \eqref{eqn:MPCagent} in the following standard form for generalized games:
 
\begin{equation} \label{eqn:GamePerAgent}
\forall v\in \ml{A}:\quad \left\{
\begin{array}{rl}
 \displaystyle \min_{z^v}& {J^v(z^v, z^{-v},\phi)}\\
\textrm{s.t.} & (z^v, z^{-v}) \in \ml{Z}(\mbf{x}, \phi),
\end{array}
\right. 
\end{equation}
where $J^v(z^v, z^{-v},\phi)$ corresponds to (\ref{eqn:RunningCost}). 
\subsection{Solution concept}

A meaningful solution concept for \eqref{eqn:GamePerAgent} is the generalized Nash equilibrium (GNE), i.e., a set of strategies $\bar{z} = \text{col}(\bar{z}^{v})_{v \in \ml{A}}$ for which no prosumer $v \in \ml{A}$ can reduce its cost by unilaterally changing its strategy \cite[\S 2]{facchinei2009nash}, i.e., $\forall v \in \ml{A}$:

\begin{align*}
J^v( \bar{z}^{v}, \bar{z}^{-v}, \phi) \leq J^v(z^v, \bar{z}^{-v},\phi), \ \forall\, (z^v,\bar{z}^{-v}) \in  \ml{Z}(\mbf{x},\phi).
\end{align*}
Here, we target the subclass of variational GNEs (v-GNEs) which correspond to the solutions of the following parametrized generalized equation \cite[Prop. 12.4]{facchinei2009nash}:

\begin{align} \label{eqn:VI}
\text{F}(\bar z,\phi) + \ml{N}_{\ml{Z}}(\bar z,\mbf{x},\phi) \ni 0,
\end{align}
where F$(z, \phi) = \text{col}(\nabla_{z^v} J^v(z^v, z^{-v},\phi))_{v\in \ml{A}}$ is the pseudo-gradient of \eqref{eqn:GamePerAgent} and $\ml{N}_{\ml{Z}}$ is the normal cone \cite[Def.~6.38]{bauschke2017convex} of the global action set $ \ml{Z}(\mbf{x}, \phi)$. 
The mapping from initial state $\mbf{x}$ and parameters $\phi$ to the solution of the generalized equation (\ref{eqn:VI}) is

\begin{align} \label{eqn:SolMap}
\ml{S}(\mbf{x}, \phi) = \{z ~|~  \text{F}(z, \phi) + \ml{N}_{\ml{Z}}(z, \mbf{x},\phi) \ni 0 \}.
\end{align} 

\noindent Variational GNEs are useful for grid operation as they satisfy the operational constraints, are strategically (Nash) stable, i.e., no prosumer has an incentive to deviate from their agreed upon input profile, and are ``economically fair" equilibria, in the sense that each prosumer incurs the same marginal loss due to the presence of the coupling constraints \cite{facchinei2009nash}.


\begin{remark} \label{remark:information}
There exist several algorithms to compute a v-GNE of \eqref{eqn:MPCagent}, i.e., an element of \eqref{eqn:SolMap}, based on different communication structures such as semi-decentralized \cite{paccagnan2018nash, belgioioso2021semi} or fully-distributed \cite{yi2019operator, bianchi2022fast}.{\hfill$\square$}
\end{remark}

\subsection{Implementation}

Most game-theoretic DSM schemes in the literature are employed in a day–ahead manner \cite{atzeni2014noncooperative}. That is, once a day (typically at midnight) the prosumers compute their storage and consumption profiles $\bar{z}\in \ml{S}(\mbf{x}, \phi)$ for the upcoming day by finding a v-GNE for the game (\ref{eqn:MPCagent}) with $\mbf{x}$ being the global system state at the time of computation and $\phi$ the vector of exogenous parameters which includes the consumption and generation forecasts for the upcoming day. Then, each prosumer applies the resulting trajectory ${\bar{z}^{v}}$ in an open-loop manner over the next 24 hours, before the whole process is repeated the next day. However, such an open-loop approach leads to undesirable ``end-of-day" effects in which prosumers significantly change their strategy towards the end of the finite-horizon which leads to unrealistic outcomes, e.g., fully discharging the batteries at the end of every day. Furthermore, open-loop approaches cannot respond to sudden disturbances, such as line faults or sudden spikes in the passive load due to e.g., heat waves. Such events may lead to inefficient operation or, worse, to blackouts or infrastructure damage caused by constraint violations. 

To overcome these drawbacks, we employ a receding-horizon implementation inspired by MPC. At time $t$, the prosumers compute a v-GNE $\bar{z} \in \ml{S}(x_t, \phi_t)$ of (\ref{eqn:MPCagent}), then, each prosumer $v$ applies the first element $\bar{z}^{v}_0$ of the predicted control trajectory. This creates a feedback policy\footnote{Under certain assumptions given in Proposition \ref{prop:Potential}, the solution of \eqref{eqn:VI} is unique and therefore \eqref{eqn:SolMap} is a singleton.}
  
\begin{align}\label{eqn:FeedbackLaw}
u^{v}_t =  \kappa^v(\mbf{x}_t, \phi_t)   = \Xi^v \ml{S}(\mbf{x}_t, \phi_t) ,
\end{align}
where $\Xi^v$ is a selection matrix that extracts the first input of the control sequence of prosumer $v$, $u_0^v$, from $\ml{S}(\mbf{x}_t, \phi_t)$ defined in \eqref{eqn:SolMap}. The resulting closed-loop system is

\begin{align}\label{eqn:System}
x_{t+1}^v = A^v x_t^v + B^v \kappa^v(\mbf{x}_t, \phi_t) + d^v_t, \quad \forall v \in \ml{A}.
\end{align}

The difference between the day-ahead optimization and RHG policies are illustrated in Figure \ref{fig:RHGschematic}.

\section{Closed-loop Stability}

In this section, we show that under symmetric pricing conditions the closed-loop system is recursively feasible and admits an asymptotically stable equilibrium point for every constant set of parameters $\phi$.
We begin by showing that the DSM game in (\ref{eqn:MPCagent}) is a generalized potential game \cite{facchinei2010decomposition} whenever the energy price is the same for each prosumer. In a potential game the equilibria coincide with the minimizers of a global optimization function \cite[Def. 2.1]{facchinei2010decomposition}. 

\begin{prop} \label{prop:Potential}
Let the energy price rates in (\ref{eqn:EnergyPrice}) satisfy $ \rho^v_{1,k} = \rho_{1,k}>0$ and $\rho_{2,k}^v = \rho_{2,k}\geq0,$ for all $ v\in\ml{A}$. Then, for all  ($\mbf{x}, \phi$) such that $\mc{Z}(\mbf{x},\phi)\neq \varnothing$, the following conditions hold:
\begin{enumerate}[label=(\roman*)]

\item The pseudo-gradient $F(\cdot, \phi)$ defined as in (\ref{eqn:VI}), is strongly monotone~\cite[Def. 2.23]{bauschke2017convex}, i.e., there exists a constant $\mu>0$, such that
$\forall z,z'\in \ml{Z}(\mbf{x},\phi)$ \\$(F(z,\phi)-F(z',\phi))^\top(z-z') \geq \mu \|z-z'\|^2$.

\item There exists a unique v-GNE for the game in (\ref{eqn:MPCagent}).

\item   The game in ($\ref{eqn:MPCagent}$) is a generalized potential game \cite[Def. 2.1]{facchinei2010decomposition} with potential function $P(\cdot, \phi)$ defined as

\begin{align} \label{eqn:Potential}
\mb{P}(z, \phi) = \sum_{k\in \bb{Z}_N} \hat{\ell}_k(x_k,u_k),
\end{align}
where $\hat{\ell}_k(x^v_k,u^v_k) \!=\! \sum_{v\in \ml{A}}\frac{1}{2}\big(
\rho_{1,k}^v \, (l_k^v)^2 
+ \rho_{2,k}^v l_k^v \!+\! \gamma_{1,k}^v \, (\zeta_k^v)^2 \!+\!  \gamma_{2,k}^v \, (q_k^v)^2   \!+\! \ell_k^v(x^v_k,u^v_k,u_k^{-v}) \big)$.
{\hfill $\square$}
\end{enumerate}
\end{prop}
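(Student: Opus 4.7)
The three claims are linked: (iii) identifies the game as an exact potential game, giving $F=\nabla_z\mathbf{P}$ on the feasible set; strong monotonicity (i) then reduces to strong convexity of the potential; and uniqueness (ii) follows from (i) by a classical variational-inequality argument. I would therefore prove them in the order (iii), (i), (ii).

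For (iii), I would verify the identity $\nabla_{z^v}\mathbf{P}(z,\phi)=\nabla_{z^v}J^v(z^v,z^{-v},\phi)$ for every $v\in\mathcal{A}$ by direct, stagewise differentiation. The $\gamma$-dependent contributions match immediately: the prefactor $\tfrac{1}{2}$ in $\hat{\ell}_k$ combined with the duplication of $\gamma_{1,k}^v(\zeta_k^v)^2$ and $\gamma_{2,k}^v(q_k^v)^2$ (which appear both explicitly and inside the included $\ell_k^v$) yields exactly $2\gamma_{1,k}^v\zeta_k^v=\partial_{\zeta_k^v}\ell_k^v$, and analogously for $q_k^v$. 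For the pricing contribution, substituting $\ell_k^v=(\rho_{1,k}L_k+\rho_{2,k})l_k^v+\gamma_{1,k}^v(\zeta_k^v)^2+\gamma_{2,k}^v(q_k^v)^2$ and invoking symmetric pricing $\rho_{1,k}^v=\rho_{1,k}$ collapses the $M$-player coupling into the aggregative quadratic form $\tfrac{1}{2}\rho_{1,k}\big(\sum_v(l_k^v)^2 + L_k L_k^{\mathcal{A}}\big)$, whose partial derivative in $l_k^v$ reproduces $\rho_{1,k}(l_k^v+L_k)+\rho_{2,k}=\partial_{l_k^v}\ell_k^v$; the chain rule $\nabla_{u_k^v}l_k^v=(1,1)^\top$ then propagates the match to $u_k^v=(e_k^v,s_k^v)$. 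The symmetric pricing is essential: without it the cross-player coupling in the loads would not admit a common antiderivative.

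For (i), since the game is now known to be an exact potential game, $F(\cdot,\phi)=\nabla_z\mathbf{P}(\cdot,\phi)$ on $\mathcal{Z}(\mathbf{x},\phi)$, so strong monotonicity of $F$ is equivalent to strong convexity of $\mathbf{P}$. I would decompose the Hessian: the aggregate pricing term contributes $\rho_{1,k}(I+\mathbf{1}\mathbf{1}^\top)\succ 0$ in the stacked loads $(l_k^v)_{v\in\mathcal{A}}$, which transfers to a positive semidefinite block in $(e_k^v,s_k^v)_{v\in\mathcal{A}}$ through $l_k^v=e_k^v+s_k^v-g_k^v$ whose null space lies in the directions with $e_k^v+s_k^v=0$ for each $v$. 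The positive state weights $\gamma_{1,k}^v,\gamma_{2,k}^v>0$, together with the fact that the integrator (\ref{eqn:Integrator}) and the battery recursion (\ref{eqn:BatteryDyn}) induce full-column-rank Toeplitz propagation maps from the trajectories of $e^v$ and $s^v$ to the trajectories of $\zeta^v$ and $q^v$, supply the missing definiteness after eliminating the states via the dynamics. The condensed Hessian in $u$ is therefore uniformly positive definite, and pulling the bound back to $z=(u,x)$ on the dynamics-constrained manifold yields the required constant $\mu>0$.

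Finally, (ii) is a corollary of (i): the set $\mathcal{Z}(\mathbf{x},\phi)$ is polyhedral and bounded under the box constraints (\ref{eqn:CapacityCon})--(\ref{eqn:LoadLim}), and $F(\cdot,\phi)$ is affine and strongly monotone, so the classical existence-and-uniqueness result for strongly monotone variational inequalities over convex compact sets delivers a unique v-GNE. The step I expect to be hardest is (i): carefully tracking how the only-PSD pricing Hessian acquires full definiteness through the state-penalty terms once the linear dynamics are substituted out, and bookkeeping the constants so as to extract a uniform $\mu>0$. Once this coercivity is in place, (iii) and (ii) follow with essentially routine verification.
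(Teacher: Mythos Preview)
Your overall strategy is close to the paper's, though you invert the order: the paper establishes (i) first by writing the pseudo-gradient explicitly as an affine map $F(z,\phi)=H(\phi)z+f(\phi)$ with $H=\mathrm{blkdiag}(H_1,H_2)$ (input block from pricing, state block from the $\gamma$-penalties) and simply asserting $H\succ 0$; it then obtains (ii) from strong monotonicity via a standard VI-uniqueness result, and finally checks (iii) by verifying $\nabla_z P=Hz+f$. Your route through (iii) first is equivalent in content---once the game is an exact potential, $F=\nabla_z P$ and strong monotonicity reduces to strong convexity of $P$---and arguably buys a cleaner setup for (i). Your verification of (iii) by stagewise differentiation and your argument for (ii) match the paper's essentially verbatim.

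Where your proposal is genuinely more careful than the paper is exactly the step you flag as hardest. The pricing Hessian $\rho_{1,k}(I+\mathbf{1}\mathbf{1}^\top)$ in the loads is only PSD in $(e,s)$ because $l=e+s-g$ has the kernel $e=-s$; the paper's bare assertion that $H\succ 0$ glosses over this, whereas you correctly propose to recover definiteness via the state penalties after condensing on the dynamics. One bookkeeping caveat to watch: the running cost penalizes states only for $k\in\mathbb{Z}_N$, so $x_N$ carries no weight, and the last input $u_{N-1}$ affects $x_N$ but no penalized state. Hence your ``full-column-rank Toeplitz'' map from inputs to \emph{penalized} states does not actually cover $u_{N-1}$ in the direction $e_{N-1}^v=-s_{N-1}^v$; on $\mathcal Z$ this direction yields a nonzero $\delta z$ (through $\delta x_N$) with zero quadratic form. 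Closing this gap requires either a terminal weight, an input regularizer, or an explicit restriction of the $z$-variable---a technicality the paper's proof also leaves implicit, but one your more honest accounting exposes.
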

\begin{proof} The proof is given in Appendix \ref{ss:AppendixA}.
\end{proof}
 The assumption that price rates $\rho_1$ and $\rho_2$ are the same means that all houses in a neighbourhood get energy delivered under the same conditions which is a reasonable assumption in the DSM context.

The following theorem gives conditions under which the closed-loop system (\ref{eqn:System}) admits a practically stable \cite[Definition 4.1]{faulwasser2018economic} equilibrium point $(\bar{x},\bar u) = \text{col}(\bar x^v, \bar u^v)_{v\in \ml{A}}$, which is the unique v-GNE of the steady-state game

\begin{align}\label{eqn:EquilSteady}
v\in \mc{A}: \quad
\left\{
\begin{array}{r l}
\displaystyle \min_{x^v\in \ml{X}^v, u^v} & \;\ell^v(x^v,u^v, u^{-v}) \\ 
 \text{s.t} &  x^v =  A^v x^v + B^v u^v + d^v\\ 
& u^v \in \mc{U}^v\cap \ml{C}(\bar{u}^{-v}).
\end{array}  
\right.
\end{align}

\begin{theorem} \label{thrm:Stability}
Suppose that $\rho^v_{1,k} = \rho_{1,k}, \; \rho_{2,k}^v = \rho_{2,k},$ for all $v \in \ml{A}$, and that the exogenous parameters $\phi$ are constant both in time and over the prediction horizon. Next, define $U = \bigtimes_{v\in \mc{A}} \mc{U}^v \cap \mc{C}(u^{-v})$, where we have suppressed all dependencies on $\phi$ to simplify the notation, and the set of all initial conditions that can be driven to $\bar x$, i.e., 

\begin{multline}
    \ml{R} := \big\{\mbf{x}\in \mc{X}~|~ \exists \tilde u = \{u_k\}_{0}^\infty \subseteq U \text{ s.t. } \forall k ~ \geq 0~ \\ x_k(\mbf{x},\tilde u) \in \mc{X} \text{ and } \lim_{k\to\infty}  x_k(\mbf{x},\tilde u) = \bar x \big\},
\end{multline}
where $x_k(\mbf{x},\{u_k\})$ denotes the solution of the collected dynamics $x_{k+1} = Ax_k + Bu_k + d$ of \eqref{eqn:LTIDynamics} with input sequence $\{u_k\}$ and initial condition $\mbf{x}$. $\ml{R}$ is the biggest possible region of attraction given the constraints and it is non-empty if $(\bar{x}, \bar{u}) \in \ml{X} \times U$. Then, there exists $\bar N\in \bb{N}$ such that if $N \geq \bar N$ the strategy profile $\bar x$ is a practically stable equilibrium of the closed-loop system \eqref{eqn:System}. That is, there exist $\eta\in \mc{KL}$ and $\gamma:\bb{R}_{\geq 0} \to \bb{R}_{\geq 0}$ such that for all $\mbf{x} \in \ml{R}$ and $t\geq 0$ the trajectories of the closed-loop system \eqref{eqn:System} satisfy $x_t\in \mc{X}$, $u_t \in U$, and

\begin{equation}
\|x_t - \bar x\| \leq \eta(\|x_t - \bar x\|,t) + \gamma(N),
\end{equation} 
with $\lim_{N\to\infty} \gamma(N) = 0$.
\end{theorem}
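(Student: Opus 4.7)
The plan is to leverage the potential game structure established in Proposition \ref{prop:Potential} to reduce the receding horizon game to a centralized economic MPC problem and then invoke standard stability results for economic MPC without terminal ingredients. By Proposition \ref{prop:Potential}(iii), under symmetric pricing the unique v-GNE of \eqref{eqn:MPCagent} coincides at each sampling instant with the unique minimizer of the potential $\mathbf{P}(\cdot, \phi)$ over $\ml{Z}(\mbf{x},\phi)$. Consequently, the closed-loop dynamics induced by \eqref{eqn:FeedbackLaw}--\eqref{eqn:System} are indistinguishable from those produced by a centralized economic MPC with stacked state $x = \text{col}(x^v)_{v\in\ml{A}}$, input $u = \text{col}(u^v)_{v\in\ml{A}}$, block-diagonal dynamics $A = \text{blkdiag}(A^v)$, $B = \text{blkdiag}(B^v)$, joint constraint set $\ml{X} \times U$, and stage cost $\hat{\ell}_k$ from \eqref{eqn:Potential}.

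First, I would formalize this reduction and identify the unique optimal steady-state of the centralized problem. Applying Proposition \ref{prop:Potential} to the steady-state game \eqref{eqn:EquilSteady} shows that $(\bar x, \bar u)$ is the unique minimizer of $\hat{\ell}$ over $x = Ax + Bu + d$ and $(x,u)\in \ml{X}\times U$. Second, I would establish strict dissipativity with respect to $(\bar x, \bar u)$: the existence of a storage function $\lambda$ and $\alpha \in \mc{K}_\infty$ such that
\begin{equation*}
\hat{\ell}(x,u) - \hat{\ell}(\bar x, \bar u) + \lambda(x) - \lambda(Ax+Bu+d) \geq \alpha(\|(x,u) - (\bar x, \bar u)\|).
\end{equation*}
Since strong monotonicity of the pseudo-gradient $F$ from Proposition \ref{prop:Potential}(i) is equivalent to strong convexity of $\mathbf{P}$, which yields strong convexity of the per-stage cost $\hat{\ell}$ in $(x,u)$, and the dynamics are affine, a linear storage function $\lambda(x) = \nu^\top x$ with $\nu$ taken as the Lagrange multiplier of the steady-state equality constraint should suffice: the rotated cost equals the Lagrangian evaluated on feasible points, inheriting strong convexity with minimum zero at $(\bar x, \bar u)$. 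Third, I would invoke the practical stability result for economic MPC without terminal ingredients from \cite{faulwasser2018economic}, whose hypotheses (strict dissipativity, continuity, and a reachability-type condition encoded by $\ml{R}$) are then satisfied, yielding the $\mc{KL}$ bound with $\gamma(N) \to 0$ for sufficiently long horizons $N \geq \bar N$.

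The main obstacle will be verifying strict dissipativity in the presence of the aggregate coupling term $\sigma^v(L)\,l^v$, which produces cross terms between prosumers in $\hat{\ell}$, together with the system-wide constraint \eqref{eqn:AggrLoadLim} shaping $U$. I expect the rotated cost to inherit the required strong convexity from the global strong convexity of $\mathbf{P}$ established via Proposition \ref{prop:Potential}(i), with the KKT conditions of the steady-state problem ensuring that all first-order terms vanish at $(\bar x, \bar u)$. Recursive feasibility on $\ml{R}$ then follows from the standard MPC tail argument: given an $N$-step feasible plan from $\mbf{x}_t$, a feasible plan from $\mbf{x}_{t+1}$ is constructed by appending to the optimal tail one step of the admissible sequence driving the state toward $\bar x$, whose existence is guaranteed by the definition of $\ml{R}$. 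Combining strict dissipativity, recursive feasibility, and the invoked economic MPC stability theorem then delivers the claimed practical stability of $\bar x$ for the closed-loop system.
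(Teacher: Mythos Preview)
Your proposal is correct and follows essentially the same route as the paper: reduce the RHG to a centralized economic MPC via the potential from Proposition~\ref{prop:Potential}, establish strict dissipativity with a linear storage function using strong convexity of $\hat\ell$ together with the affine dynamics, and then invoke the practical-stability result for economic MPC without terminal ingredients from \cite{faulwasser2018economic}. The only places where the paper is more explicit than your sketch are (a) it verifies the two reachability hypotheses of \cite[Theorem~4.1]{faulwasser2018economic} separately via controllability of each $(A^v,B^v)$ rather than appealing to a tail argument, and (b) it obtains $\gamma(N)\to 0$ by checking uniform continuity of the shifted value function $V_N+\lambda$ (piecewise quadratic on compact $\mc{X}$) so that \cite[Lemma~4.1]{faulwasser2018economic} applies.
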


\begin{proof} The proof is given in Appendix \ref{ss:AppendixB}.
\end{proof}

This result holds under the assumptions that parameters are time-invariant, i.e., $\phi$ is constant, and that the price rates $\rho_1$ and $\rho_2$ are the same for all prosumers.  It implies that if the closed-loop system is unforced, i.e., external parameters remain constant, the system stabilizes at an equilibrium point. This could happen, e.g, in a calm period without disturbances in which all prices $\rho_1$ and $\rho_2$ and time-varying load and consumption bounds $\bar{L}_t, \underline{L}_t$, etc. remain constant. The equilibrium point is the unique strategically-stable and fair GNE subject to steady-state dynamics and is a desirable operating point of the unforced system. Further, since the potential function (\ref{eqn:Potential}) is strongly monotone and the parameters $\phi$ enter the constraints linearly, we expect a degree of robustness to variation in $\phi$, see \cite{limon2009input}. This is supported by our numerical results in the next section.

\section{Simulation study}
We perform a numerical study in which we demonstrate (i) that our RHG approach outperforms the standard day-ahead optimization in terms of peak load shaving and (ii) that it can enforce the aggregate load constraints despite unforeseen disturbances. All simulations are implemented in Python and v-GNE computations are carried out centrally using \texttt{quadprog} \cite{andersen2022cvxopt}. However, as pointed out in Remark~\ref{remark:information} other information structures are possible.

We assume that all prosumers own the same storage device, i.e., a lithium-ion battery with SoC dynamics as in (\ref{eqn:BatteryDyn}) and parameters $\alpha^v = \sqrt[24]{0.9}$ (which corresponds to a leakage rate of 0.9 over the 24 hours) and $\beta^v = 0.9$ as in \cite{atzeni2013demand}, $\bar{q}^v = 15$ kWh and $\bar{s}^v = -\underline{s}^v = 0.7 \, \bar{q}^v$. The price rates are set as $\rho_1 = 0.015 $ \textdollar/kWh and $\rho_2 = 0.05$ \textdollar/kWh as a base price and the nominal consumption profiles and solar generation profiles were collected between May and November 2019 and are of single-family homes situated in the state of New York \cite{pecan2022residential}. For peak consumption hours, 6:00 to 10:00 and 18:00 to 22:00 the price rates are doubled. The average electricity price for the day-ahead optimization and the RHG scheme is about 0.37 \textdollar/kWh. The energy debt constraint at midnight of every day is set to $-1 \text{kWh}\leq\zeta_N^v\leq1\text{kWh} $.

\subsection{Peak shaving}
In the first case study, we consider 10 active prosumers and 5 passive consumers in the state of New York. In Figure \ref{fig:AggregateLoad}, we compare their aggregate load on the main grid over 48 hours in three different scenarios: no DSM (i.e. the prosumer's load profile equals their nominal load), day-ahead optimization, and RHG. The RHG scheme reduces peak aggregate load by $-49\%$ which is $6\%$ more than the peak shaving achieved by the day-ahead scheme. In fact, unlike the open-loop day-ahead optimization, RHG does not suffer from any ``end-of-day" effects, namely, the fact that prosumers at the end of the day discharge all their batteries and reduce their load on the grid as this is most cost efficient in a finite-horizon scenario.

\begin{figure}

        \includegraphics[width=0.485\textwidth]{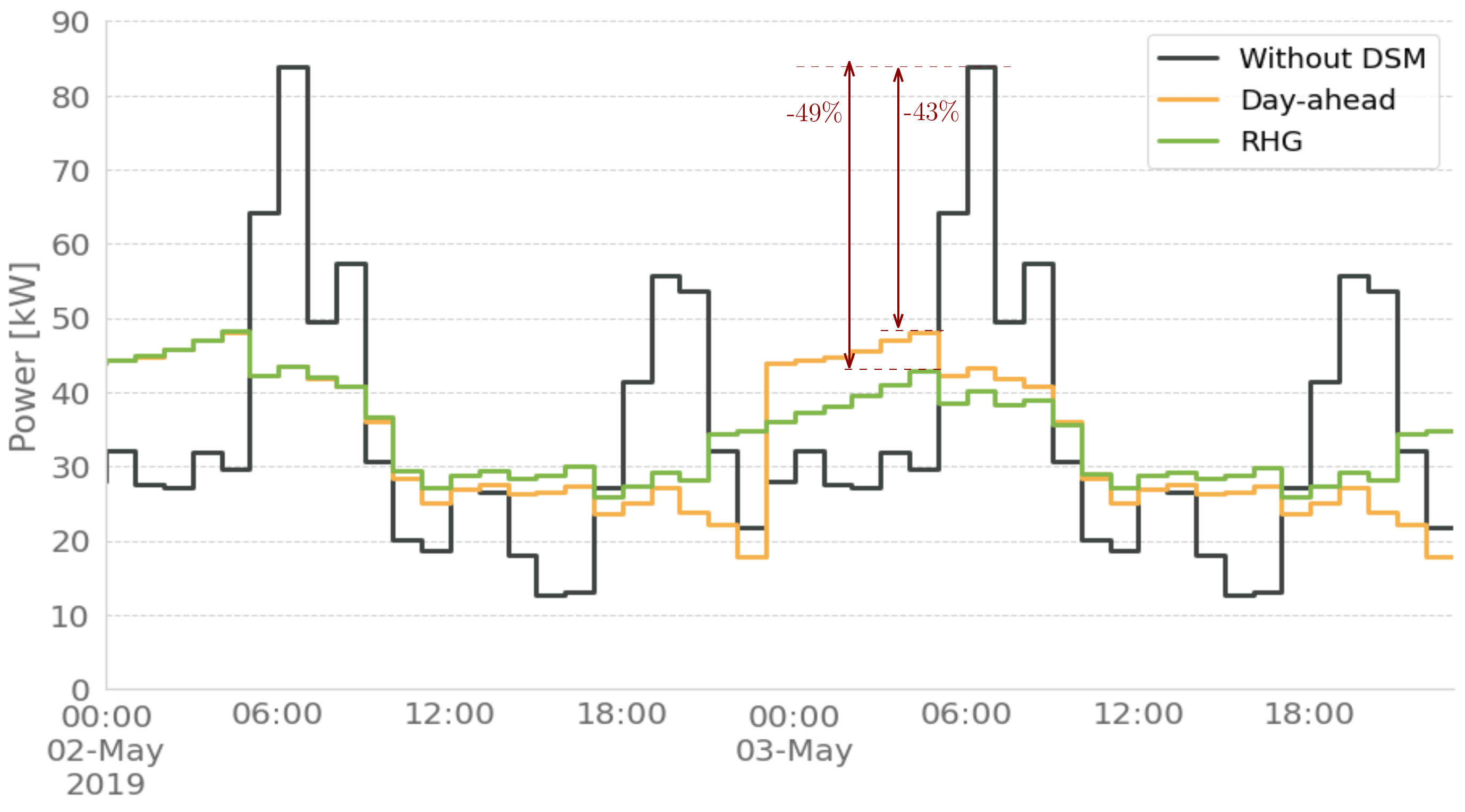} 
        \caption{Aggregate Load comparison for New York over 2 days for a set of 10 active prosumers and 5 passive consumers. Price rates $\rho_1, \;\rho_2$ are doubled from 6:00 to 10:00 and 18:00 to 22:00 to incentivize peak shaving.}
\label{fig:AggregateLoad}
\end{figure} 

\subsection{Disturbance rejection}
A major advantage of closed-loop schemes such as RHG is the immediate response to disturbances and price fluctuations. We model a scenario in which a disturbance, such as a line fault,  leads to a sudden drop in the power available to prosumers from the main grid, i.e., the upper bound $\bar{L}_t$ drops by $-60\%$ for 5 hours from 1:00 to 6:00. The RHG scheme successfully shifts the load of prosumers to later hours in the day as shown in Figure \ref{fig:Disturbance}. This compares favourably with the RHG-DSM scheme in \cite{stephens2015game} which cannot handle coupling constraints and is thus not capable of enforcing aggregate load constraints (with or without disturbances).
\begin{figure}

        \includegraphics[width=0.485\textwidth]{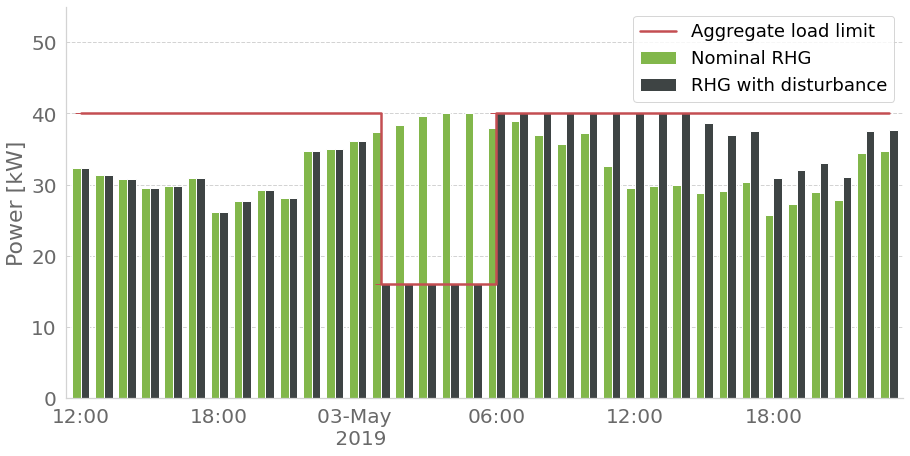} 
        \caption{Disturbance rejection of the RHG for 10 active prosumers and 5 passive consumers. The green bars represent the aggregate load profile of the nominal RHG. The blue bars show that the RHG scheme can successfully react to a sudden -60 $\%$ drop in the aggregate load limit $\bar{L}_t$.}
\label{fig:Disturbance}
\end{figure}

\section{Conclusion}

In this paper, we proposed a closed-loop demand-side management policy based on Receding Horizon Games. The policy coordinates self-interested prosumers who use local storage/generation, and load shifting to collectively enact peak-shaving, enforce system-wide constraints, and reject disturbances. Further, we derived conditions for closed-loop stability and demonstrated the potential of the proposed method through numerical simulations in the case of homogeneous electricity prices for the prosumers. The case of non-symmetric pricing, which is currently not covered by our theory, is a compelling future research direction.

\bibliography{RHG_for_DSM}

\appendix

\subsection{Proof of Proposition \ref{prop:Potential}}\label{ss:AppendixA}

(i): The pseudo-gradient $F(\cdot, \phi)$ defined in (\ref{eqn:VI}) is an affine mapping of the form

\begin{align} \label{eqn:Subdiffz}
\text{F}(z,\phi) = H(\phi) z+ f(\phi)
\end{align}
with   $H_1 = (S^T \rho_1 S) \otimes I+ (S^T \rho_1 S) \otimes \mathds{1}^T \mathds{1}$, $H_2 = 2 \; \text{blkdiag}(Q^1,\dots,Q^M)$ and $H = \text{blkdiag}(H_1, H_2)$ where $S = I_N \otimes   \left[\begin{smallmatrix}
1& 1
\end{smallmatrix}\right] \in \bb{R}^{N\times 2 N} $ is a selection matrix. Further,  $f(\phi) =  \text{col}(S^T\big(  -  (\sum_{s} g^s + g^v)^T \rho_1+\mathds{1}_N^T \rho_2 \big)\mathds{1}_{M}, \mbf{0}_{2M(N+1)})$ is a constant vector. The symmetric matrix $H$ is positive definite as $\rho_1>0$ and $\rho_2\geq 0$, thus $F(\cdot,\phi)$ is strongly monotone~\cite[Def. 2.23]{bauschke2017convex}. 
(ii): Since $F(\cdot, \phi)$ is strongly monotone by (i), we can invoke \cite[Cor. 23.37]{bauschke2017convex} to conclude that $\ml{S}(\mbf{x},\phi)$ in \eqref{eqn:SolMap} is a singleton whenever $\mc{Z}(\mbf{x},\phi)\neq \varnothing$ . (iii): To show that the game is an exact potential, we take the gradient of $\mb{P}(z,\phi)$ in (\ref{eqn:Potential}) and, by performing some algebraic simplifications, we obtain:

\begin{align*}
\nabla_z P(z, \phi) = H(\phi)z + f(\phi),
\end{align*}

\noindent which coincides with the pseudo-gradient in (\ref{eqn:Subdiffz}). Thus, $\mb{P}(z, \phi)$ is an exact potential for the generalized potential game in (\ref{eqn:GamePerAgent}) \cite[Def.~2.1]{facchinei2010decomposition}.{\hfill $\blacksquare$}


\subsection{Proof of Theorem \ref{thrm:Stability}} \label{ss:AppendixB}
The proof is in two steps: 1. We define a MPC problem whose solutions coincide with v-GNEs of the game \eqref{eqn:MPCagent} by exploiting the potential derived in Prop.~\ref{prop:Potential}, then 2. we apply economic MPC stability results to this surrogate problem.

1. We start by defining the surrogate OCP using the stage cost from the potential function in (\ref{eqn:Potential}), i.e.,

\begin{align} \label{eqn:PotentialMPC}
&&\displaystyle V_N(\mbf{x}) = \min_{u,\, x}& \displaystyle \sum_{k\in \bb{Z}_N} \hat{\ell}(x_k, u_k)\\ 
&&\textrm{s.t.} \quad &  x_{k+1} =  A x_k + B u_k + d, \;   k\in \bb{Z}_N \nonumber\\
&&& u_k \in U, \;  k\in \bb{Z}_{N} \nonumber \\
&&& x_k \in \ml{X}_k,~~ x_0 = \mbf{x}, \;   k\in \bb{Z}_{N+1} \nonumber
\end{align}

\noindent If $\rho^v_{1,k} = \rho_{1,k}, \; \rho_{2,k}^v = \rho_{2,k}, \; \forall v \in \ml{A}$, then \eqref{eqn:MPCagent} is an exact potential game by Proposition \ref{prop:Potential} and the unique v-GNE of \eqref{eqn:MPCagent} exactly coincides with the minimizer of (\ref{eqn:PotentialMPC}). Further, the steady state $(\bar{x},\bar u)$ that solves \eqref{eqn:EquilSteady} also satisfies 

\begin{equation}\label{eqn:Blub}
    (\bar{x}, \bar{u})  =  \displaystyle \argmin_{x \,\in\, \ml{X},u \,\in\, U}\; \left \{\hat{\ell}(x,u)~|~ x =  A x + Bu + d\right\}.
\end{equation}

2. We show stability of $\bar{x}$ by exploiting the equivalence between (\ref{eqn:MPCagent}) and (\ref{eqn:PotentialMPC}). Our objective is to apply \cite[Theorem 4.1]{faulwasser2018economic} from economic MPC. To do so, we need to show (i) that there exists a non-negative function $\lambda$ and $\delta \in \mc{K}_\infty$ such that

\begin{equation} \label{eq:dissipativity}
\delta(\|x-\bar x\|)\leq \hat\ell(x,u) - \hat\ell(\bar x,\bar u) + \lambda(x) - \lambda(x^+)
\end{equation}
with $x^+ = Ax + Bu + d$; (ii) the system \eqref{eqn:LTIDynamics} is exponentially reachable \cite[Assumption 4.2]{faulwasser2018economic} on $\ml{R}$; and (iii) that \eqref{eqn:LTIDynamics} is $2M$-step reachable in the sense of \cite[Assumption 4.3]{faulwasser2018economic}.

To show (i), we observe from \eqref{eqn:Potential} that $\hat{\ell}$ is strongly convex, that the dynamics are linear and that the constraint set $\mc{X} \times U$ is polyhedral. Thus, \cite[Prop. 4.3]{damm2014exponential} applies and there exist $a\in \bb{R}^{2M}$ and $b > 0$ such that $\lambda(x) = a^T x$ and $\delta(s) = b s^2$ satisfy \eqref{eq:dissipativity}. Further, since $\mc{X}$ is compact we can make $\lambda$ non-negative by redefining $\lambda(x) =a^T x - \min_{\mc{X}}a^T x$.

Condition (ii) holds if $\forall \mbf{x}\in \ml{R}$ there exists a sequence $\{u_k\}$ such that the solution trajectory $x_k$ satisfies $\|x_k - \bar x\| + \|u_k - \bar u\| \leq c\epsilon^k$ for $c > 0$ and $\epsilon \in (0,1)$. Each pair $(A^v,B^v)$ is controllable and thus the infinite-horizon MPC feedback law \cite{meadows1993receding} is exponentially stabilizing and can be used to construct a suitable sequence $\{u_k\}, \,$ $\forall \mbf{x}\in \ml{R}$. Finally, (iii) follows from linear controllability of each $(A^v,B^v)$. 

Since (i)-(iii) are satisfied, we can invoke \cite[Theorem 4.1]{faulwasser2018economic} to prove practical stability of $\bar x$ and recursive feasibility, if $N$ is chosen large enough. Finally, to show that $\gamma(N) \to 0$ as $N\to \infty$ we exploit \cite[Lemma 4.1]{faulwasser2018economic} for which we need to prove that $\tilde V_N(\mbf{x}) = V_N(\mbf{x}) + \lambda(\mbf{x})$ is uniformly continuous. The OCP \eqref{eqn:PotentialMPC} has a quadratic cost, linear dynamics, and polyhedral constraints, thus $V_N$ is a piecewise quadratic function \cite{tondel2003algorithm}. As $\lambda$ is affine, this implies that $\tilde V_N$ is also piecewise quadratic and, thus, uniformly continuous over the compact domain $\mc{X}$, by the Heine--Cantor theorem. {\hfill $\blacksquare$}

\end{document}